		\newtheorem{theorem}{Theorem}
\newtheorem{proposition}{Proposition}
\newtheorem{problem}{Problem}
\newenvironment{proof}[1][Proof]{\begin{trivlist}
\item[\hskip \labelsep {\bfseries #1}]}{\end{trivlist}}
\newcommand{\qed}{\nobreak \ifvmode \relax \else
      \ifdim\lastskip<1.5em \hskip-\lastskip
      \hskip1.5em plus0em minus0.5em \fi \nobreak
      \vrule height0.75em width0.5em depth0.25em\fi}
\definecolor{purple}{RGB}{139, 0, 139}
\newif\iftodo   
\newif\iftodoshort  
\newcommand{\Rmnum}[1]{\uppercase\expandafter{\romannumeral #1}}
\newcommand{\diag}{\mathop{\mathrm{diag}}}
\newcommand{\field}[1]{\mathbb{#1}}
\newcommand{\emenge}[1]{\mathscr{#1}}
\newcommand{\set}[1]{\mathscr{#1}}
\newcommand{\operator}[1]{\mathrm{#1}}
\newcommand{\R}{{\field{R}}}
\newcommand{\dl}{(\text{d})}
\newcommand{\ul}{(\text{u})}
\newcommand{\ula}{(\text{u},1)}
\newcommand{\ulb}{(\text{u},2)}
\newcommand{\Ns}{{\emenge{N}}}
\newcommand{\Ks}{{\emenge{K}}}
\newcommand{\Cs}{{\emenge{C}}}
\newcommand{\sinr}{\operator{SINR}}
\newcommand{\ma}{\bm}
\newcommand{\ve}{\bm}
\newcommand{\V}{\tilde{\ma{V}}}
\newcommand{\Pm}{P^{\text{max}}}
\newcommand{\hV}{\hat{\ma{V}}}
\newcommand{\hW}{\hat{\ma{W}}}
\newcommand{\Aa}{\ma{A}_{\ve{\alpha}}}
\newcommand{\Bb}{\ma{B}_{\ve{\beta}}}
\newcommand{\Vb}{\V_{\ve{b}}}
\newcommand{\Vt}{\V_{\ve{\theta}}}
\newcommand{\uGb}{\underline{\ma{G}}_{\ve{b}}}
\newcommand{\oGb}{\overline{\ma{G}}_{\ve{b}}}
\newcommand{\uJc}{\set{J}_c^{\ul}}
\newcommand{\uJca}{\set{J}_c^{(\text{u},1)}}
\newcommand{\uJcb}{\set{J}_c^{(\text{u},2)}}
\newcommand{\uJn}{\widehat{\set{J}}_n^{\ul}}
		\newcommand{\cosl}[1]{}
		\newcommand{\insl}[1]{#1}
		\newcommand{\resl}[1]{}
\begin{document}
		%
		\title{Joint Optimization of Coverage, Capacity and Load Balancing in Self-Organizing Networks}
		\author{
		    \IEEEauthorblockN{Qi Liao\IEEEauthorrefmark{1}, Daniyal Amir Awan\IEEEauthorrefmark{1}, S\l awomir Sta\'{n}czak\IEEEauthorrefmark{1}\IEEEauthorrefmark{2}}

		    \IEEEauthorblockA{\IEEEauthorrefmark{1}Technische Universit\"{a}t Berlin, 10587 Berlin, Germany \\
		    qi.liao.de@ieee.de, \{daniyal.a.awan, slawomir.stanczak\}@tu-berlin.de}
		}
		\maketitle
		
		\begin{abstract}
		  This paper develops an optimization framework for self-organizing
		  networks (SON). The objective is to ensure efficient network
		  operation by a joint optimization of different SON functionalities,
		  which includes capacity, coverage and load balancing. Based on the
		  axiomatic framework of monotone and strictly subhomogeneous function, we formulate
		  an optimization problem for the uplink and propose a two-step
		  optimization scheme using fixed point iterations: i) per base station antenna tilt optimization
		  and power allocation, and ii) cluster-based base station assignment
		  of users and power allocation.  We then consider the downlink, which
		  is more difficult to handle due to coupled variables, and show
		  downlink-uplink duality relationship. As a result, a solution for
		  the downlink is obtained by solving the uplink problem.  Simulations
		  show that our approach achieves a good trade-off between coverage,
		  capacity and load balancing.
		\end{abstract}
		%
		
		\section{Introduction}\label{sec:intro}
		
		A major challenge towards self-organizing networks (SON) is the joint
		optimization of multiple SON use cases by
		coordinately handling multiple configuration parameters. Widely
		studied SON use cases include coverage and capacity optimization
		(CCO), mobility load balancing (MLB) and mobility robustness
		optimization (MRO)\cite{3GPP36902}.\cosl{We need a reference here} However, most of
		these works study an isolated single use case and ignore the conflicts
		or interactions between the use cases
		\cite{giovanidis2012dist,razavi2010self}. 
		
		In contrast, this paper considers a joint optimization of two strongly
		coupled use cases: CCO and MLB. The objective is to achieve a good
		trade-off between coverage and capacity performance, while ensuring a
		load-balanced network. The SON functionalities are usually implemented
		at the network management layer and are designed to deal with \lq\lq
		long-term\rq\rq \ network performance. Short-term optimization of
		individual users is left to lower layers of the protocol stack. To
		capture long-term global changes in a network, we consider a
		cluster-based network scenario, where users served by the same base
		station (BS) with similar SINR distribution are adaptively grouped
		into clusters. Our objective is to jointly optimizing the following
		variables:
		\begin{itemize}
		\item Cluster-based BS assignment and power allocation.
		\item BS-based antenna tilt optimization and power allocation.
		\end{itemize}
		The joint optimization of assignment, antenna tilts, and powers is an
		inherently challenging problem. The interference and the resulting
		performance measures depend on these variables in a complex and
		intertwined manner. Such a problem, to the best of the authors'
		knowledge, has been studied in only a few works. For example, in
		\cite{klessig2012improving} a problem of jointly optimizing antenna
		tilt and cell selection to improve the spectral and energy efficiency
		is stated, however, the solution derived by a structured searching
		algorithm may not be optimal.
		
		In this paper, we propose a robust algorithmic framework built on a
		utility model, which enables fast and near-optimal uplink solutions and sub-optimal downlink solutions\cosl{Do
		  we know that this is near-optimal?} by exploiting three properties:
		1) the monotonic property and fixed point of the monotone and strictly subhomogenoues (MSS) functions \footnote{Many literatures use the term {\it interference function} for the functions satisfy three condotions, positivity, monotonicity and scalability \cite{yates95}. Positivity is shown to be a consequence of the other two properties \cite{leung2004convergence}, and we use the term {\it strctly subhomogeneous} in place of scalable from a constraction mapping point of view in keeping with some related literature \cite{nuzman2007contraction}.}, 2)
		decoupled property of the antenna tilt and BS assignment optimization
		in the uplink network, and 3) uplink-downlink duality. The first
		property admits global optimal solution with fixed-point iteration for
		two specific problems: utility-constrained power minimization and
		power-constrained max-min utility balancing
		\cite{vucic2011fixed,stanczak2009fundamentals,schubert2012interference,yates95}. The
		second and third properties enable decomposition of the
		high-dimensional optimization problem, such as the joint beamforming
		and power control proposed in
		\cite{BocheDuality06,schubert2005iterative,huang2013joint,he2012multi}. Our
		distinct contributions in this work can be summarized as follows:\\
		1) We propose a max-min utility balancing algorithm for
		capacity-coverage trade-off optimization over a joint space of antenna
		tilts, BS assignments and powers. The utility defined as a convex
		combination of the average SINR and the worst-case SINR implies the balanced performance of capacity and coverage. Load
		balancing is improved as well due to a uniform distribution of the
		interference among the BSs.\\
		2) The proposed utility is formulated based on the MSS functions, which allows us to find the optimal solution by applying
		fixed-point iterations.\\
		3) Note that antenna tilts are BS-specific variables, while assignments are cluster-specific, we develop two optimization problems with the same objective functions,  formulated either as a problem of per-cluster variables or as a problem of per-base variables. We
		propose a two-step optimization algorithm in the uplink to iteratively
		optimize the per BS variables (antenna tilts and BS power budgets) and the cluster-based variables (assignments and cluster power). Since both problems aim at optimizing the same objective function, the algorithm is shown to be convergent.\\
		4) The decoupled property of antenna tilt and assignment in the uplink decomposes the high-dimensional optimization problem and enables more efficient optimization algorithm. We then analyze the uplink-downlink duality by using the Perron-Frobenius theory\cite{meyer2000matrix}, and propose an efficient
		sub-optimal solution in the downlink by utilizing optimized variables
		in the dual uplink.
		   
		
		\section{System Model}\label{sec:Model}
		
		We consider a multicell wireless network composed of a set of BSs
		$\set{N}:=\{1,\ldots, N\}$ and a set of users $\set{K}:=\{1,\ldots,
		K\}$. Using fuzzy C-means clustering algorithm \cite{bezdek1984fcm},
		we group users with similar SINR distributions\footnote{We assume the
		  Kullback-Leibler divergence as the distance metric.} and served by the same BS into
		clusters. The clustering algorithm is beyond the scope of this
		paper. Let the set of user clusters be denoted by
		$\set{C}:=\{1,\ldots,C\}$, and let $\ma{A}$ denote a $C\times K$
		binary user/cluster assignment matrix whose columns sum to one.  The
		BS/cluster assignment is defined by a $N\times C$ binary matrix
		$\ma{B}$ whose columns also sum to one.  
		
		Throughout the paper, we assume a frequency flat channel. The
		average/long-term downlink path attenuation between $N$ BSs and $K$
		users are collected in a channel gain matrix $\ma{H}\in \R^{N\times
		  K}$.  We introduce the cross-link gain matrix $\ma{V}\in\R^{K\times
		  K}$, where the entry $v_{lk}(\theta_j)$ is the cross-link gain
		between user $l$ served by BS $j$, and user $k$ served by BS $i$,
		i.e., between the transmitter of the link $(j, l)$ and the receiver of
		the link $(i, k)$. Note that $v_{lk}(\theta_j)$ depends on the antenna
		downtilt $\theta_j$.  Let the BS/user assignment matrix be denoted by
		$\ma{J}$ so that we have $\ma{J}:=\ma{B}\ma{A}\in\{0,1\}^{N\times K}$,
		and $\ma{V}:=\ma{J}^T\ma{H}$.  We denote by $\ve{r}:=[r_1, \ldots,
		r_N]^T$, $\ve{q}:=[q_1, \ldots, q_C]^T$ and $\ve{p}:=[p_1, \ldots,
		p_K]^T$the BS transmission power budget, the cluster power allocation
		and the user power allocation, respectively.
		%
		
		\subsection{Inter-cluster and intra-cluster power sharing factors}
		\label{subsec:powFactor}
		
		We introduce the inter-cluster and intra-cluster power sharing factors
		to enable the transformation between two power vectors with different
		dimensions.  Let $\ve{b}:=[b_1, \ldots, b_C]^T$ denote the serving BSs
		of clusters $\{1, \ldots, C\}$. We define the vector of the
		inter-cluster power sharing factors to be $\ve{\beta}:=[\beta_1,
		\ldots, \beta_C]^T$, where $\beta_c:=q_c/r_{b_c}$.  With the
		BS/cluster assignment matrix $\ma{B}$, we have $\ve{q}:=\Bb^T \ve{r}$,
		where $\Bb:=\ma{B}\diag\{\ve{\beta}\}$.  Since users belonging to the
		same cluster have similar SINR distribution, we allocate the cluster
		power uniformly to the users in the cluster. The intra-cluster sharing
		factors are represented by $\ve{\alpha}:=[\alpha_1, \ldots,
		\alpha_K]^T$ with $\alpha_k=1/|\set{K}_{c_k}|$ for $k\in\set{K}$,
		where $\set{K}_{c_k}$ denotes the set of users belonging to cluster
		$c_k$, while $c_k$ denotes the cluster with user $k$. We have
		$\ve{p}:=\Aa^T\ve{q}$, where $\Aa:=\ma{A}\diag\{\ve{\alpha}\}$. The
		transformation between BS power $\ve{r}$ and user power $\ve{p}$ is
		then $\ve{p}:=\ma{T}\ve{r}$ where the transformation matrix
		$\ma{T}:=\Aa^T\Bb^T$.
		%
		\subsection{Signal-to-interference-plus-noise ratio}\label{subsec:SINR}
		
		Given the cross-link gain matrix $\ma{V}$, the downlink SINR of the $k$th user depends on all
		 powers and is given by
		\begin{equation}
		\sinr_k^{\dl}:=\frac{p_k \cdot v_{kk}(\theta_{n_k})}{\sum_{l\in\set{K}\setminus k} p_l \cdot v_{lk}(\theta_{n_l})+\sigma_k^2}, k\in\set{K} 
		\label{eqn:DL_SINR}
		\end{equation}
		where $n_k$ denotes the serving BS of user $k$, $\sigma_k^2$ denotes
		the noise power received in user $k$. Likewise, the uplink SINR is
		\begin{equation}
		\sinr_k^{\ul}:=\frac{p_k \cdot v_{kk}(\theta_{n_k})}{\sum_{l\in\set{K}\setminus k} p_l \cdot v_{kl}(\theta_{n_k})+\sigma_k^2}, k\in\set{K} 
		\label{eqn:UL_SINR}
		\end{equation}
		%
		Assuming that there is no self-interference, the cross-talk terms can
		be collected in a matrix
		\begin{equation}
		  [\V]_{lk}:=
		  \begin{cases}
		    v_{lk}(\theta_{n_l}), & l\neq k\\
		    0, & l=k
		  \end{cases}.
		  \label{eqn:PsiMat}
		\end{equation} 
		Thus the downlink interference received by user $k$ can be written as
		$I_k^{\dl}:=[\tilde{\ma{V}}^T\ve{p}]_k$, while the uplink interference
		is given by $I_k^{\ul}:=[\tilde{\ma{V}}\ve{p}]_k$.
		
		A crucial property is that the uplink SINR of user $k$ depends on the
		BS assignment $n_k$ and the single antenna tilt $\theta_{n_k}$ alone,
		while the downlink SINR depends on the BS assignment vector
		$\ve{n}:=[n_1,\ldots, n_K]^T$, and the antenna tilt vector
		$\ve{\theta}:=[\theta_1, \ldots, \theta_N]^T$. The decoupled property
		of uplink transmission has been widely exploited in the context of
		uplink and downlink multi-user beamforming \cite{BocheDuality06}\cosl{Reference} and
		provides a basis for the optimization algorithm in this paper. 

		The notation used in this paper is summarized in Table \ref{tab:CovCap_notation}.

\begin{table}[t]
\centering
\caption{NOTATION SUMMARY}
\begin{tabular}{|c|c|}
\hline
$\Ns$ & set of BSs  \\
$\Ks$ & set of users \\
$\Cs$ & set of user clusters\\ 
$\ma{A}$ & cluster/user assignment matrix\\
$\ma{B}$ & BS/cluster assignment matrix\\
$\ma{J}$ & BS/user assignment matrix\\
$c_k$ & cluster that user $k$ is subordinated to\\
$\Ks_{c}$ & set of users subordinated to cluster $c$\\
$\ma{H}$ & channel gain matrix\\
$\ma{V}$ & interference coupling matrix\\
$\tilde{\ma{V}}$ & interference coupling matrix without intra-cell interference\\
$\tilde{\ma{V}}_{\ve{b}}$ & interference coupling matrix depending on BS assignments $\ve{b}$\\
$\tilde{\ma{V}}_{\ve{\theta}}$ & interference coupling matrix depending on antenna tilts $\ve{\theta}$\\
$\ve{r}$ & BS power budget vector\\
$\ve{q}$ & cluster power vector\\
$\ve{p}$ & user power vector\\
$\ve{\alpha}$ & intra-cluster power sharing factors\\
$\ve{\beta}$ & inter-cluster power sharing factors\\
$\ma{A}_{\ve{\alpha}}$ & transformation from $\ve{q}$ to $\ve{p}$, $\ve{p}:=\ma{A}_{\ve{\alpha}}^T\ve{q}$\\
$\ma{B}_{\ve{\beta}}$ & transformation from $\ve{r}$ to $\ve{q}$, $\ve{q}:=\ma{B}_{\ve{\beta}}^T\ve{r}$\\
$\ma{T}$ & transformation from $\ve{r}$ to $\ve{p}$, $\ve{p}:=\ma{T}\ve{r}$\\ 
$\ve{\theta}$ & BS antenna tilt vector\\
$\ve{b}$ & serving BSs of clusters\\
$b_c$ & serving BS of cluster $c$\\
$\ve{n}$ & serving BSs of the users\\
$n_k$ & serving BS of user $k$\\
$\ve{\sigma}$ & noise power vector\\
$\Pm$ & sum power constraint\\
\hline
\end{tabular}
\label{tab:CovCap_notation}
\end{table}

		\section{Utility Definition and Problem Formulation}\label{sec:ProbForm}
		
		As mentioned, the objective is a joint optimization of coverage,
		capacity and load balancing. We capture coverage by the worst-case
       SINR, while the average SINR is used to represent capacity. A cluster-based utility $U_c(\ve{\theta},\ve{r},\ve{q},\ve{b})$ is introduced as the combined function of the worst-case SINR and average SINR, depending on BS
		power allocation $\ve{r}$, antenna downtilt $\ve{\theta}$ ,
		cluster power allocation $\ve{q}$ and BS/cluster assignment
		$\ve{b}$.\footnote{The reader should note that user-specific variables
		  $(\ve{p},\ve{n})$ can be derived directly from cluster-specific
		  variables $\ve{q}$ and $\ve{b}$, provided that cluster/user
		  assignment $\ma{A}$ and intra-cluster power sharing factor
		  $\ve{\alpha}$ are given.}	 To achieve the load balancing by distributing the clusters to the BSs such that their utility targets can be achieved \footnote {The assignment of clusters also distributes the interference among the BSs.}, we formulate the following objective
		$$\max_{(\ve{r},\ve{\theta},\ve{q},\ve{b})}\min_{c\in\set{C}} \frac{U_c(\ve{r},\ve{\theta},\ve{q},\ve{b})}{\gamma_c}$$
		where  $\gamma_c$ is the predefined utility target for cluster $c$.
The BS variables $(\ve{r},\ve{\theta})$ and cluster variables $(\ve{q}, \ve{b})$ are optimized by iteratively solving\\ 
1) Cluster-based BS assignment and power allocation
$\max_{(\ve{q},\ve{b})}\max_{c\in\set{C}} U_c(\ve{q},\ve{b})/\gamma_c$ given the fixed $(\hat{\ve{r}},\hat{\ve{\theta}})$ \\
2) BS-based antenna tilt optimization and power allocation $\max_{(\ve{r},\ve{\theta})}\max_{c\in\set{C}} U_c(\ve{r},\ve{\theta})/\gamma_c$ given the fixed $(\hat{\ve{q}},\hat{\ve{b}})$.

In the following we introduce the utility definition and problem formulation for the cluster-based and the BS-based problems respectively. We start with the problem statement and algorithmic approaches for the
		uplink. We then discuss the downlink in Section \ref{sec:Duality}.
		
		%
		\subsection{Cluster-Based BS Assignment and Power Allocation}\label{subsec:clusterOpt}
		
		Assume the per-BS variables
		$(\hat{\ve{r}}, \hat{\ve{\theta}})$ are fixed, let the interference
		coupling matrix depending on BS assignment $\ve{b}$ in
		\eqref{eqn:PsiMat} be denoted by $\Vb$. We first define two utility
		functions indicating capacity and coverage per cluster respectively,  then we introduce the joint utility as a combination of the capacity and coverage utility. After that we define the cluster-based max-min utility balancing problem based on the joint utility.
		%
		\subsubsection{Average SINR Utility (Capacity)}\label{subsubsec:LB_A}
		
		With the intra-cluster power sharing factor introduced in Section
		\ref{subsec:powFactor}, we have $\ve{p}:=\Aa^T \ve{q}$. Define the
		noise vector $\ve{\sigma}:=[\sigma_1^2, \ldots, \sigma_K^2]^T$, the
		average SINR of all users in cluster $c$ is written as
		\begin{align}
		\bar{U}_c^{\ula}&(\ve{q}, \ve{b})  := \frac{1}{|\set{K}_c|} \sum_{k\in\set{K}_c}\sinr_k^{\ul}\nonumber\\
		&= \frac{1}{|\set{K}_c|}  \sum_{k\in\set{K}_c}\frac{q_c \alpha_k v_{kk}}{\left[\Vb \Aa^T \ve{q}+\ve{\sigma}\right]_k}\nonumber\\
		&\geq \frac{1}{|\set{K}_c|}\frac{q_c \sum_{k\in\set{K}_c} \alpha_k v_{kk}}{\sum_{k\in\set{K}_c} \left[\Vb \Aa^T \ve{q}+\ve{\sigma}\right]_k} 
		=U_c^{\ula}(\ve{q}, \ve{b})
		\label{eqn:CL_cap_1}
		\end{align}
		The uplink capacity utility of cluster $c$ denoted by $U_c^{\ula}$ is
		measured by the ratio between the total useful power and the total
		interference power received in the uplink in the cluster. Utility
		$U_c^{\ula}$ is used instead of $\bar{U}_c^{\ula}$ because of two
		reasons: First, it is a lower bound for the average SINR. Second, it
		has certain monotonicity properties (introduced in Section
		\ref{sec:OPAlgor}) which are useful for optimization.
		
		Introducing the cluster coupling term  $\oGb^{\ul}:=\ma{\Psi}\ma{A}\Vb\Aa^T$, where $\ma{\Psi}:=\diag\{|\set{K}_1|/g_1, \ldots, |\set{K}_c|/g_C\}$ and $g_c:=\sum_{k\in \set{K}_c}\alpha_k v_{kk}$ for $c\in\set{C}$; and the noise term $\overline{\ve{z}}:=\ma{\Psi}\ma{A}\ve{\sigma}$, 
		 the capacity utility is simplified as 
		\begin{align}
		U_c^{\ula}(\ve{q}, \ve{b})&:=\frac{q_c}{\uJca(\ve{q}, \ve{b})}\label{eqn:CL_cap_2}\\
		\mbox{where } \uJca(\ve{q}, \ve{b})&:=\left[\oGb^{\ul}\ve{q}+\overline{\ve{z}}\right]_c. \label{eqn:CL_cap_inter}
		\end{align}
		%
		\subsubsection{Worst-Case SINR Utility (Coverage)}
		Roughly speaking, the coverage problem arises when a certain number of the SINRs are lower than the predefined SINR threshold. Thus, improving the coverage performance is equivalent to maximizing the worst-case SINR such that the worst-case SINR achieves the desired SINR target. We then define the uplink coverage utility for each cluster as
		\begin{align}
		U_c^{\ulb}(\ve{q},\ve{b})&:=\min_{k\in\set{K}_c}\sinr_k^{\ul}=\min_{k\in\set{K}_c} 
		                            \frac{q_c\alpha_k v_{kk}}{\left[\Vb \Aa^T \ve{q}+\ve{\sigma}\right]_k}\nonumber\\
															&= \frac{q_c}{\max_{k\in\set{K}_c}\left[ \ma{\Phi}\Vb \Aa^T \ve{q}+\ma{\Phi}\ve{\sigma}\right]_k}
		\label{eqn:CL_cov_1}
		\end{align}
		where $\ma{\Phi}:=\diag\{1/\alpha_1 v_{11}, \ldots, 1/\alpha_K v_{KK}\}$. We define a $C \times K$ matrix $\ma{X}:=[\ve{x}_1|\ldots|\ve{x}_C]^T$, where $\ve{x}_c:=\ve{e}^j_K$ and $\ve{e}^j_i$ denotes an $i$-dimensional binary vector which has exact one entry (the j-th entry) equal to 1. Introducing the term $\uGb^{\ul}:=\ma{\Phi}\Vb \Aa^T$, and the noise term $\underline{\ve{z}}:=\ma{\Phi}\ve{\sigma}$, the coverage utility is given by
		\begin{align}
		U_c^{\ulb}(\ve{q},\ve{b})&:=\frac{q_c}{\uJcb(\ve{q}, \ve{b})}\label{eqn:CL_cov_2}\\
		\mbox{where } \uJcb(\ve{q}, \ve{b}) & := \max_{\ve{x}_c:=\ve{e}_K^j, j\in\set{K}_c} \left[\ma{X}\uGb^{\ul}\ve{q}+\ma{X}\underline{\ve{z}}\right]_c. \label{eqn:CL_cov_inter}
		\end{align}
		%
		\subsubsection{Joint Utility and Cluster-Based Max-Min Utility Balancing}\label{eqn:LB_maxmin}
		The joint utility $U_c^{\ul}(\ve{q}, \ve{b})$ is defined as 
		\begin{align}
		U_c^{\ul}(\ve{q}, \ve{b})&:=\frac{q_c}{\uJc(\ve{q}, \ve{b})}\label{eqn:LB_utility_1}\\
		\mbox{where }\uJc(\ve{q}, \ve{b})&:= \mu\uJca(\ve{q}, \ve{b})+(1-\mu)\uJcb(\ve{q}, \ve{b})\label{eqn:LB_utility_2}.
		\end{align}
		In other words, the joint interference function $\set{I}_c^{\ul}$ is a convex combination of $\set{I}_c^{\ula}$ in \eqref{eqn:CL_cap_inter} and $\set{I}_c^{\ulb}$ in \eqref{eqn:CL_cov_inter}. 
		
		 The cluster-based power-constrained max-min utility balancing problem in the uplink is then provided by
		\begin{problem}[Cluster-Based Utility Balancing]
		\begin{equation}
		C^{\ul}(\Pm)=\max_{\ve{q}\geq 0, \ve{b}\in \set{N}^C} \min_{c\in\set{C}} \frac{U_c^{\ul}(\ve{q}, \ve{b})}{\gamma_c}, \mbox{s.t. } \|\ve{q}\|\leq \Pm
		\label{eqn:LB_OP}
		\end{equation}
		Here, $\|\cdot\|$ is an arbitrary monotone norm, i.e., $\ve{q}\leq\ve{q}'$ implies $\|\ve{q}\|\leq\|\ve{q}'\|$,  $\Pm$ denotes the total power constraint. 
		
		According to the joint utility in \eqref{eqn:LB_utility_1},\eqref{eqn:LB_utility_2}, the algorithm optimizes the performance of capacity when we set the tuning parameter $\mu=1$ (utility is equivalent to the capacity utility in \eqref{eqn:CL_cap_2}), while with $\mu=0$ it optimizes the performance of coverage (utility equals to the coverage utility in \eqref{eqn:CL_cov_2}). By tuning $\mu$ properly, we can achieve a good trade-off between the performance of coverage and capacity.
		\label{prob:LB}
		\end{problem} 
		%
		\subsection{BS-Based Antenna Tilt Optimization and Power Allocation}\label{subsec:AO}
		Given the fixed $(\hat{\ve{q}},\hat{\ve{b}})$, we compute the intra-cluster power allocation factor $\ve{\beta}$, given by $\beta_c:=\hat{q}_c/\sum_{c\in\set{C}_{b_c}}\hat{q}_c$ for $c\in\set{C}$. We denote the cross-link coupling matrix depending on $\ve{\theta}$ by $\Vt$. In the following we formulate the BS-based max-min utility balancing problem such that it has the same physical meaning as the problem stated in \eqref{eqn:LB_OP}. We then introduce the BS-based joint utility interpreted by $(\ve{r}, \ve{\theta})$.
		
		\subsubsection{BS-Based Max-Min Utility Balancing}\label{subsubsec:AO_maxmin}
		To be consistent with our objective function $C^{\ul}(\Pm)$ in \eqref{eqn:LB_OP}, we transform the cluster-based optimization problem to the BS-based optimization problem: 
		\begin{problem}[BS-Based Utility Balancing]
		\begin{align}
		C^{(u)}&(\Pm)=\max\limits_{\ve{r}\geq 0, \ve{\theta}\in\Theta^N} \min\limits_{c\in\set{C}}
		\frac{U_c^{\ul}(\ve{r},\ve{\theta})}{\gamma_c}\nonumber\\
		&=\max\limits_{\ve{r}\geq 0, \ve{\theta}\in\Theta^N}
		\min\limits_{n\in\set{N}}\left(\min\limits_{c\in\set{C}_n}\frac{U_c^{\ul}(\ve{r},\ve{\theta})}{\gamma_c}\right)\nonumber\\
		& = \max\limits_{\ve{r}\geq 0, \ve{\theta}\in\Theta^N} \min\limits_{n\in\set{N}}
		\widehat{U}_n^{\ul}(\ve{r},\ve{\theta}), \mbox{ s.t. } \|\ve{r}\|\leq P^{\text{max}}
		\label{eqn:maxmin_AO}
		\end{align}
		\label{prob:AO}
		\end{problem}
		where $\Theta$ denotes the predefined space for antenna tilt configuration.
		\subsubsection{BS-Based Joint Utility}\label{subsubsec:AO_joinyUtility}
		 It is shown in \eqref{eqn:maxmin_AO} that the cluster-based problem is transformed to the BS-based problem by defining 
		\begin{align}
		\widehat{U}_n^{\ul}(\ve{r},\ve{\theta})&:=\min_{c\in\set{C}_n}\frac{U_c^{\ul}(\ve{r},\ve{\theta})}{\gamma_c}= \frac{r_n}{\uJn(\ve{r}, \ve{\theta})}\label{eqn:AO_utility_1}\\
		\uJn(\ve{r}, \ve{\theta}) &:= \max_{c\in\set{C}_n} \frac{\gamma_c}{\beta_c} \uJc(\ve{r}, \ve{\theta}),
		\label{eqn:AO_utility_2}
		\end{align} 
		where $\uJc(\ve{r}, \ve{\theta})$ is obtained from $\uJc(\ve{q}, \ve{b})$ in \eqref{eqn:LB_utility_2} by substituting $\ve{q}$ with $\ve{q}:=\Bb^T\ve{r}$, and $\V_{\ve{b}}$ with $\V_{\ve{\theta}}$. Note that \eqref{eqn:AO_utility_1}  is derived by applying the inter-cluster sharing factor such that $r_n:=q_c/\beta_c$ for $n=b_c$. Due to lack of space we omit the details of the individual per BS capacity and coverage utilities corresponding to the cluster-based utilities \eqref{eqn:CL_cap_1} and \eqref{eqn:CL_cov_1}.
		\section{Optimization Algorithm}\label{sec:OPAlgor}
		We developed our optimization algorithm based on the fixed-point iteration algorithm proposed by Yates \cite{yates95}, by exploiting the properties of the monotone and strictly subhomogeneous functions.
		\subsection{MSS function and Fixed-Point Iteration}\label{subsec:contraction}
	  The vector function $\ve{f}: \R_+^K\mapsto \R_+^K$ of interest has the following two properties:
	\begin{itemize}
	\item {\it Monotonicity}:  $\ve{x}\leq \ve{y}$ implies $\ve{f}(\ve{x})\leq\ve{f}(\ve{y})$,.
	\item  {\it Strict subhomogeneity}: for each $\alpha>1, \ve{f}(\alpha \ve{x})<\alpha\ve{f}(\ve{x})$. 
	\end{itemize}
	A function satisfying the above two properties is referred to be  {\it monotonic and strict subhomogeneous (MSS)}. When the strict inequality is relaxed to weak inequality, the function is said to be {\it monotonic and subhomogeneous (MS)}.
	\begin{theorem}\cite{nuzman2007contraction}
	Suppose that $\ve{f}: \R_+^K\mapsto \R_+^K$ is MSS and that $\ve{h}=\ve{x}/l(\ve{x})$, where $l:\R_+^K \mapsto \R_+$ is MS. For each $\theta>0$, there is exactly one eigenvector $\ve{v}$ and the associated eigenvalue $\lambda$ of $\ve{f}$ such that $l(\ve{v})=\theta$. Given an arbitrary $\theta$, the repeated iterations of the function 
	\begin{equation}
	\ve{g}(\ve{x})=\theta \ve{f}(x)/l(\ve{f(x)})
	\label{eqn:fixedpointiteration}
	\end{equation}
	converge to a unique fixed point such that $l(\ve{v})=\theta$.
	\label{Theoremmapping}
	\end{theorem}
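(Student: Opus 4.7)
The plan is to prove this as a nonlinear Perron--Frobenius statement, leveraging the fact that MSS maps act as strict contractions in the Thompson (part) metric on the positive cone, and then reducing the eigenproblem to a genuine fixed-point problem on the level set $L_\theta:=\{\ve{x}\in\R_{++}^K : l(\ve{x})=\theta\}$.

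First I would verify that $\ve{g}$ maps $L_\theta$ into itself: since $l$ is positively homogeneous of degree one (an MS map that is also positively homogeneous, which follows on the positive ray from monotonicity and subhomogeneity by a standard squeezing argument), one has $l(\ve{g}(\ve{x})) = \theta\,l(\ve{f}(\ve{x}))/l(\ve{f}(\ve{x})) = \theta$. Hence iteration of $\ve{g}$ stays on $L_\theta$, and any fixed point $\ve{v}=\ve{g}(\ve{v})$ on $L_\theta$ immediately yields $\ve{f}(\ve{v})=\lambda\ve{v}$ with $\lambda = l(\ve{f}(\ve{v}))/\theta$, identifying fixed points of $\ve{g}$ with eigenpairs of $\ve{f}$ normalized by $l$.

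Next I would introduce the Thompson metric $d_T(\ve{x},\ve{y})=\log\max\{\alpha(\ve{x},\ve{y}),\alpha(\ve{y},\ve{x})\}$, where $\alpha(\ve{x},\ve{y}):=\inf\{t>0 : \ve{x}\leq t\ve{y}\}$, and show that $\ve{f}$ is nonexpansive under $d_T$ and strictly contractive away from equality. The key step is: if $\ve{x}\leq \alpha\ve{y}$ with $\alpha>1$, then monotonicity gives $\ve{f}(\ve{x})\leq\ve{f}(\alpha\ve{y})$ and strict subhomogeneity gives $\ve{f}(\alpha\ve{y})<\alpha\ve{f}(\ve{y})$, so $\alpha(\ve{f}(\ve{x}),\ve{f}(\ve{y}))<\alpha(\ve{x},\ve{y})$ whenever $\alpha(\ve{x},\ve{y})>1$. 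Combined with the observation that dividing by $l(\ve{f}(\ve{x}))$ (which is also MS) does not expand the Thompson distance, this shows $\ve{g}$ is a strict contraction on $L_\theta$ in the sense $d_T(\ve{g}(\ve{x}),\ve{g}(\ve{y}))<d_T(\ve{x},\ve{y})$ whenever $\ve{x}\neq\ve{y}$.

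From here uniqueness on $L_\theta$ is immediate: two distinct fixed points would contradict strict contractivity. For existence and convergence I would argue that the orbit $\{\ve{g}^t(\ve{x}_0)\}$ is contained in a $d_T$-bounded subset of $L_\theta$ (the boundedness follows because $\ve{g}$ cannot inflate the initial Thompson ball), and on such a set the strict contraction can be upgraded to a uniform contraction factor $\kappa<1$ (using compactness in $d_T$ of closed balls intersected with $L_\theta$), after which Banach's fixed-point theorem delivers convergence to the unique fixed point $\ve{v}$. Finally, translating back, $\ve{v}$ is the unique eigenvector of $\ve{f}$ on $L_\theta$ with eigenvalue $\lambda=l(\ve{f}(\ve{v}))/\theta$.

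The main obstacle I expect is the upgrade from pointwise strict contraction to a uniform contraction rate, which is precisely what is needed for Banach's theorem; this is where the compactness of $d_T$-closed bounded subsets of the positive cone and the continuity properties implied by monotonicity must be used carefully. Handling the boundary $\partial\R_+^K$ (components that might be zero) also requires attention: one typically needs either irreducibility-type assumptions or a preliminary argument showing that the iteration quickly enters the open cone $\R_{++}^K$, so that the Thompson metric is finite on the relevant trajectories.
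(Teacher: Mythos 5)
First, a point of reference: the paper does not prove this statement at all --- Theorem~\ref{Theoremmapping} is imported verbatim from \cite{nuzman2007contraction} and used as a black box --- so there is no in-paper proof to compare yours against; what follows is an assessment of your argument on its own terms.

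Your overall strategy (recast the eigenproblem as a fixed-point problem on the level set $L_\theta$ and run a Thompson-metric contraction argument) is the right family of ideas, and your treatment of $\ve{f}$ itself is sound: monotonicity plus strict subhomogeneity does give $\alpha(\ve{f}(\ve{x}),\ve{f}(\ve{y}))<\alpha(\ve{x},\ve{y})$ whenever $\alpha(\ve{x},\ve{y})>1$, so $\ve{f}$ is $d_T$-nonexpansive and strictly distance-decreasing off the diagonal. The gap is the step where you pass from $\ve{f}$ to $\ve{g}$: the claim that dividing by $l(\ve{f}(\ve{x}))$ ``does not expand the Thompson distance'' is false. The normalization $N(\ve{x}):=\theta\ve{x}/l(\ve{x})$ rescales its two arguments by \emph{different} scalars; such a rescaling is an isometry for the Hilbert projective metric but not for the Thompson metric. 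Concretely, take $K=2$, $l=\|\cdot\|_\infty$, $\ve{u}=(1,1)$, $\ve{w}=(\tfrac12,2)$: then $d_T(\ve{u},\ve{w})=\log 2$, while $N(\ve{u})=\theta(1,1)$ and $N(\ve{w})=\theta(\tfrac14,1)$ give $d_T(N(\ve{u}),N(\ve{w}))=\log 4$. In general $N$ can double Thompson distances (it is nonexpansive only as a map from the Hilbert metric into the Thompson metric), and since the strict contraction you established for $\ve{f}$ comes with no uniform modulus, your estimates do not even show that $\ve{g}=N\circ\ve{f}$ is nonexpansive on $L_\theta$, let alone strictly contractive. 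Everything downstream (the Edelstein/Banach step) therefore has no foundation. Switching wholesale to the Hilbert metric does not repair this, because $d_H$ is scale-invariant and strict subhomogeneity --- the only ``strictness'' you have --- is invisible to it.

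Two further issues. First, ``MS implies positively homogeneous'' is false: $l(\ve{x})=\|\ve{x}\|+1$ is monotone and subhomogeneous but not homogeneous; your squeeze only yields $\alpha\, l(\ve{x})\le l(\alpha\ve{x})\le l(\ve{x})$ for $0<\alpha<1$, which does not force equality. Homogeneity of $l$ (it is meant to be a monotone norm, as in Problem~\ref{prob:LB}) must be taken as a hypothesis, and without it even the invariance $l(\ve{g}(\ve{x}))=\theta$ fails. Second, the obstacles you flag at the end (uniform modulus, behaviour at $\partial\R_+^K$) are real but secondary; the argument that actually works separates the two conclusions: uniqueness follows from a direct extremal-scaling argument applied to two putative eigenpairs $(\ve{v},\lambda)$, $(\ve{w},\mu)$ with $\beta=\alpha(\ve{v},\ve{w})$, using $\lambda\ve{v}=\ve{f}(\ve{v})\le\ve{f}(\beta\ve{w})<\beta\ve{f}(\ve{w})=\beta\mu\ve{w}$ to contradict minimality of $\beta$; convergence in \cite{nuzman2007contraction} rests on sup-norm nonexpansiveness in logarithmic coordinates together with a genuinely two-sided treatment of the normalization, not on the one-sided estimate you use.
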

 The fixed point iteration in \eqref{eqn:fixedpointiteration} is used to obtain the solution of the following max-min utility balancing problem 
 \begin{equation}
 \max_{\ve{p}}\min_{k\in\set{K}} U_k(\ve{p}), \mbox{ s.t. } \|\ve{p}\|\leq P^{\text{max}}
 \label{eqn:prob_maxmin_1}
 \end{equation}
 where the utility function can be defined as  $U_k(\ve{p}):= p_k/f_k(\ve{p})$.

		\subsection{Joint Optimization Algorithm}\label{subsec:JointOptAlgor}
		We aim on jointly optimizing both problems, by optimizing $(\ve{q}, \ve{b})$ in Problem \ref{prob:LB} and $(\ve{r},\ve{\theta})$ in Problem \ref{prob:AO} iteratively with the fixed-point iteration. In the following we present some properties that are required to solve the problem efficiently and to guarantee the convergence of the algorithm. 
		\subsubsection{Decoupled Variables in Uplink}
		In uplink the variables $\ve{b}$ and $\ve{\theta}$ are decoupled in the interference functions \eqref{eqn:LB_utility_2} and \eqref{eqn:AO_utility_2}, i.e., $\uJc(\ve{q}, \ve{b}):=\uJc(\ve{q}, b_c)$ and $\uJn(\ve{r}, \ve{\theta}):=\uJn(\ve{r}, \theta_n)$. Thus, we can decompose the BS assignment (or tilt optimization) problem into sub-problems that can be independently solved in each cluster (or BS), and the interference functions can be modified as functions of the power allocation only:
		\begin{align}
		\uJc(\ve{q})&:=\min_{b_c\in\set{N}} \uJc(\ve{q}, b_c)\label{eqn:modi_inter_1}\\
		\uJn(\ve{r})&:=\min_{\theta_n\in\Theta} \uJn(\ve{r}, \theta_n) \label{eqn:modi_inter_2}
		\end{align} 
		\subsubsection{Standard Interference Function}
		The modified interference function \eqref{eqn:modi_inter_1} and \eqref{eqn:modi_inter_2} are \textit{standard}.
		Using the following three properties: 1) an affine function $\ve{\set{I}}(\ve{p}):=\ma{V}\ve{p}+\ve{\sigma}$ is standard, 2) if $\ve{\set{I}}(\ve{p})$ and $\ve{\set{I}}'(\ve{p})$ are standard, then $\beta\ve{\set{I}}(\ve{p})+(1-\beta)\ve{\set{I}}'(\ve{p})$ are standard, and 3) If $\ve{\set{I}}(\ve{p})$ and $\ve{\set{I}}'(\ve{p})$ are standard, then $\ve{\set{I}}^{\text{min}}(\ve{p})$ and $\ve{\set{I}}^{\text{max}}(\ve{p})$ are standard, where $\ve{\set{I}}^{\text{min}}(\ve{p})$ and $\ve{\set{I}}^{\text{max}}(\ve{p})$ are defined as $\set{I}_j^{\text{min}}(\ve{p}):=\min\{\set{I}_j(\ve{p}), \set{I}_j'(\ve{p})\}$ and $\set{I}_j^{\text{max}}(\ve{p}):=\max\{\set{I}_j(\ve{p}), \set{I}_j'(\ve{p})\}$ respectively \cite{yates95}, we can easily prove that \eqref{eqn:modi_inter_1} and \eqref{eqn:modi_inter_2} are standard interference functions.
		
		Substituting \eqref{eqn:modi_inter_1} and \eqref{eqn:modi_inter_2} in Problem \ref{prob:LB} and Problem \ref{prob:AO}, define $U_c^{\ul}(\ve{q}):=q_c/\set{I}_c^{\ul}(\ve{q})$ and $U_n^{\ul}(\ve{r}):=r_n/\uJn(\ve{r})$, 
		 we can write both problems in the general framework of the max-min fairness problem \eqref{eqn:prob_maxmin_1}:
		\begin{itemize}
		\item[]Problem 1. $\max_{\ve{q}\geq 0}\min_{c\in\set{C}} U_c^{\ul}(\ve{q})/\gamma_c, \|\ve{q}\|\leq \Pm$.
		\item[]Problem 2. $\max_{\ve{r}\geq 0}\min_{n\in\set{N}} U_n^{\ul}(\ve{r}), \|\ve{r}\|\leq \Pm$
		\end{itemize}
		The property of the decoupled variables in uplink and the property of utilities based on the standard interference functions enable us to solve each problem efficiently with two iterative steps: 1) find optimum variable $b_c$ (or $\theta_n$) for each cluster $c$ (or each BS $n$) independently, 2) solve the max-min balancing power allocation problem with fixed-point iteration.
		\subsubsection{Connections between The Two Problems}
		Problem \ref{prob:LB} and Problem \ref{prob:AO} have the same objective $C^{\ul}(\Pm)$ as stated in \eqref{eqn:LB_OP} and \eqref{eqn:maxmin_AO}, i.e., given the same variables $(\hat{\ve{q}}, \hat{\ve{b}}, \hat{\ve{r}}, \hat{\ve{\theta}})$, using \eqref{eqn:AO_utility_1}, we have $\min_{c\in\set{C}} U_c^{\ul}/\gamma_c=\min_{n\in\set{N}} \widehat{U}_n^{\ul}$. Both problems are under the same sum power constraint. However, the convergence of the two-step iteration requires two more properties: 1) the BS power budget $\ve{r}$ derived by solving Problem \ref{prob:AO} at the previous step should not be violated by the cluster power allocation $\ve{q}$ found by optimizing Problem \ref{prob:LB}, and 2) when optimizing Problem \ref{prob:AO}, the inter-cluster power sharing factor $\ve{\beta}$ should be consistent with the derived cluster power allocation $\ve{q}$ in Problem \ref{prob:LB}. 
		
		To fulfill the first requirement, we introduce the per BS power constraint $P_n^{\text{max}}$ for Problem \ref{prob:AO} equivalent to the BS power budget $r_n$ in Problem \ref{prob:LB}. We also propose a scaled version of fixed point iteration similar to the one proposed in \cite{nuzman2007contraction} to iteratively scale the cluster power vector and achieve the max-min utility boundary under per BS power budget constraints, as stated below.
		\begin{equation}
		q_c^{(t+1)} =\frac{\gamma_c\set{I}_c^{\ul}(\ve{q}^{(t)})}{\|\ma{B}\ve{\set{I}}^{\ul}(\ve{q}^{(t)}) \oslash {\ve{P}^{\text{max}}}^{(t)}\|_{\infty}} 
		\label{eqn:FP_LB}
		\end{equation}
		where $\oslash$ denote the element-wise division of vectors, $\|\cdot\|_{\infty}$ denotes the maximum norm, ${\ve{P}^{\text{max}}}^{(t)}:=\ve{r}^{(t)}$. 
		To fulfill the second requirement, once $\ve{q}^{(n+1)}$ is derived, the power sharing factors $\ve{\beta}$ need to be updated for solving Problem \ref{prob:AO} at the next step, given by
		\begin{equation}
		\ve{\beta}^{(n+1)}:=\ma{Q}^{-1}\ma{B}^T\ve{r}^{(n)}, \mbox{where } \ma{Q}=\diag\{\ve{q}^{(n+1)}\}
		\label{eqn:FP_LB_beta}
		\end{equation}
		%
		The scaled fixed-point iteration to optimize Problem \ref{prob:AO} is provided by
		\begin{equation}
		r_n^{(t+1)}= \frac{P^{\text{max}}}{\|\ve{\widehat{\set{I}}}^{\ul}(\ve{r}^{(t)})\|}\cdot \widehat{\set{I}}_n^{\ul}(\ve{r}^{(t)})
		\label{eqn:FP_AO_1}
		\end{equation}
		%
		The joint optimization algorithm is given in Algorithm \ref{alg:optim-algor}.
		\begin{algorithm}[t]\label{alg:optim-algor}
		\caption{Joint Optimization of Problem \ref{prob:LB} and \ref{prob:AO}}
		\begin{algorithmic}[1]
		  \STATE broadcast the information required for computing $\ma{V}$, predefined constraint $P^{\text{max}}$ and thresholds $\epsilon_1,\epsilon_2,\epsilon_3$ 
		  \STATE arbitrary initial power vector $\ve{q}^{(t)}>0$ and iteration step $t:=0$
			\REPEAT[joint optimization of Problem \ref{prob:LB} and \ref{prob:AO}]
		  \REPEAT[fixed-point iteration for every cluster $c\in\set{C}$]
			\STATE broadcast $\ve{q}^{(t)}$ to all base stations
			\FOR{all assignment options $b_c \in \set{N}$}
			\STATE compute $\set{I}_c^{\ul}(\ve{q}^{(t)}, b_c)$ with \eqref{eqn:LB_utility_2}  
			\ENDFOR
			\STATE compute $\set{I}_c^{\ul}(\ve{q}^{(t)})$ with \eqref{eqn:modi_inter_1} and update $b_c^{(t+1)}$
		  \STATE update $q_c^{(t+1)}$ with \eqref{eqn:FP_LB}
		  \STATE $t := t+1$
		  \UNTIL{convergence: $\bigl| q_c^{(t+1)}  - q_c^{(t)}\bigr| / q_c^{(t)} \leq \epsilon_1$}
			\STATE update $\ve{\beta}^{(t)}$ with \eqref{eqn:FP_LB_beta}
			\REPEAT[fixed-point iteration for every BS $n\in\set{N}$]
			\STATE broadcast $\ve{r}^{(t)}$ to all base stations
			\FOR{all antenna tilt options $\theta_n \in \Theta$}
			\STATE compute $\widehat{\set{I}}_n^{\ul}(\ve{r}^{(t)}, \theta_n)$ with \eqref{eqn:AO_utility_2}  
			\ENDFOR
			 \STATE compute $\widehat{\set{I}}_n^{\ul}(\ve{r}^{(t)})$ with \eqref{eqn:modi_inter_2} and update $\theta_n^{(t+1)}$
			 \STATE update $r_c^{(n+1)}$ with \eqref{eqn:FP_AO_1}
			 \STATE $t := t+1$
			 \UNTIL{convergence: $\bigl| r_n^{(t+1)}  - r_n^{(t)}\bigr| / r_n^{(t)} \leq \epsilon_2$}
			\STATE update ${P_n^{\text{max}}}^{(t)}:=r_n^{(t)}$ 
			\STATE compute $l^{(t+1)}:=\min_{n\in\set{N}} \widehat{U}^{\ul}_n(\ve{r}^{(n+1)})$
		\UNTIL{convergence: $|l^{(t+1)}-l^{(t)}|/l^{(t)}\leq\epsilon_3$}
		\end{algorithmic}
		\end{algorithm}
		%
		\section{Uplink-Downlink Duality}\label{sec:Duality}
		
		We state the joint optimization problem in uplink in Section
		\ref{sec:ProbForm} and propose an efficient solution in Section
		\ref{sec:OPAlgor} by exploiting the decoupled property of $\ma{V}$
		over the variables $\ve{\theta}$ and $\ve{b}$. The downlink problem,
		due to the coupled structure of $\ma{V}^T$, is more difficult to
		solve. As extended discussion we want to address the relationship
		between the uplink and the downlink problem, and to propose a
		sub-optimal solution for downlink which can be possibly found through
		the uplink solution.
		
		Let us consider cluster-based max-min capacity utility balancing
		problem in Section \ref{subsubsec:LB_A} as an example. In the downlink
		the optimization problem is written as
		\begin{align}
		\vspace{-0.2em}
		\max_{\ve{q}, \ve{b}}\min_c &\frac{U_c^{(\text{d},1)}(\ve{q}, \ve{b})}{\gamma_c}, \mbox{s.t. } \|\ve{q}\|_1\leq P^{\text{max}}\nonumber\\
		\mbox{where }  & U_c^{(\text{d},1)} :=\frac{q_c}{[\ma{\Psi}\ma{A}\Vb^T\Aa^T\ve{q}+\ma{\Psi}\ve{z}^{\dl}]}
		\label{eqn:LB_dl}
		\vspace{-0.2em}
		\end{align}
		The cluster-based received noise is written as $\ve{z}^{\dl}:=\ma{A}\ve{\sigma}^{\dl}$.
		
		In the following we present a virtual dual uplink network in terms of
		the feasible utility region for the downlink network in
		\eqref{eqn:LB_dl} via Perron-Frobenius theory, such that the solution
		of problem \eqref{eqn:LB_dl} can be derived by solving the uplink
		problem \eqref{eqn:LB_ul} with the algorithm introduced in Section
		\ref{sec:OPAlgor}.
		\begin{proposition}
		  Define a virtual uplink network where the link gain matrix is
		  modified as
		  $\ma{W}_{\ve{b}}:=\diag\{\ve{\alpha}\}\Vb\diag^{-1}\{\ve{\alpha}\}$,
		  i.e., $w_{lk}:=v_{lk}\frac{\alpha_l}{\alpha_k}$, and the received
		  uplink noise is denoted by $\ve{\sigma}^{\ul}:=[{\sigma^2_1}^{\ul},
		  \ldots, {\sigma^2_K}^{\ul}]^T$, where
		  ${\sigma_k^2}^{\ul}:=\frac{\Sigma_{\text{tot}}}{|\set{K}_{c_k}|\cdot
		    C}$ for $k\in\set{K}$, and assume
		  $\Sigma_{\text{tot}}:=\|\ve{\sigma}^{\ul}\|_1=\|\ve{\sigma}^{\dl}\|_1$
		  (which means, the sum noise is equally distributed in clusters,
		  while in each cluster the noise is equally distributed in the
		  subordinate users). The dual uplink problem of problem
		  \eqref{eqn:LB_dl} is given by
		\begin{align}
		\vspace{-0.2em}
		\max_{\ve{q},\ve{b}}\min_c & \frac{U_c^{(\text{u},1)}(\ve{q}, \ve{b})}{\gamma_c}, \mbox{s.t. } \|\ve{q}\|_1\leq \Pm\nonumber\\
		\mbox{where } & U_c^{(\text{u},1) }:=\frac{q_c}{[\ma{\Psi}\ma{A}\ma{W}_{\ve{b}}\Aa^T\ve{q}+\ma{\Psi}\ve{z}^{\ul}]}
		\label{eqn:LB_ul}
		\vspace{-0.2em}
		\end{align}
		where $\ve{z}^{\ul}:=\ma{A}\ve{\sigma}^{\ul}$.
		\label{prop:Duality}
		\end{proposition}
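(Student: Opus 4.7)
The plan is to reformulate both problems as Perron--Frobenius eigenvalue problems on an extended $(C{+}1)\times(C{+}1)$ nonnegative matrix and then show the two extended matrices have equal spectral radius. Following the Collatz--Wielandt / extended-matrix construction used in the fixed-point references cited for Theorem~1, the optimal balanced level of the downlink problem \eqref{eqn:LB_dl} equals $1/\rho(\tilde{\ma{G}}^{\dl})$, where
$$\tilde{\ma{G}}^{\dl}:=\begin{pmatrix}\ma{\Gamma}\ma{\Psi}\ma{A}\Vb^{T}\Aa^{T} & \ma{\Gamma}\ma{\Psi}\ma{A}\ve{\sigma}^{\dl}\\ \tfrac{1}{\Pm}\ve{1}^{T}\ma{\Gamma}\ma{\Psi}\ma{A}\Vb^{T}\Aa^{T} & \tfrac{1}{\Pm}\ve{1}^{T}\ma{\Gamma}\ma{\Psi}\ma{A}\ve{\sigma}^{\dl}\end{pmatrix},$$
with $\ma{\Gamma}:=\diag\{\ve{\gamma}\}$. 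The identical construction applied to \eqref{eqn:LB_ul} yields $\tilde{\ma{G}}^{\ul}$, obtained from $\tilde{\ma{G}}^{\dl}$ by replacing $\Vb^{T}$ with $\ma{W}_{\ve{b}}$ and $\ve{\sigma}^{\dl}$ with $\ve{\sigma}^{\ul}$. It therefore suffices to prove $\rho(\tilde{\ma{G}}^{\dl})=\rho(\tilde{\ma{G}}^{\ul})$.

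The next step exposes a transpose relation between the two coupling blocks. Using $\Aa=\ma{A}\diag\{\ve{\alpha}\}$ and the definition $\ma{W}_{\ve{b}}=\diag\{\ve{\alpha}\}\Vb\diag^{-1}\{\ve{\alpha}\}$, a direct substitution gives the key identity
$$\ma{A}\ma{W}_{\ve{b}}\Aa^{T}=\ma{A}\diag\{\ve{\alpha}\}\Vb\ma{A}^{T}=(\ma{A}\Vb^{T}\Aa^{T})^{T}.$$
Thus, after factoring the common left-multiplier $\ma{\Gamma}\ma{\Psi}$, the $C\times C$ upper-left blocks of $\tilde{\ma{G}}^{\ul}$ and $\tilde{\ma{G}}^{\dl}$ are transposes of each other. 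A short diagonal-similarity argument using $(\ma{\Gamma}\ma{\Psi})^{1/2}$ then establishes that these two blocks share the same spectral radius, in the spirit of the matrix-level duality used in \cite{BocheDuality06}.

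It then remains to extend this similarity to the full $(C{+}1)\times(C{+}1)$ matrix, which is where the specific noise choice of the proposition enters. I would exhibit an explicit positive diagonal $T=\diag(T_{0},1)$ built from $\ma{\Gamma}\ma{\Psi}$ and verify, block by block, that $T^{-1}(\tilde{\ma{G}}^{\ul})^{T}T=\tilde{\ma{G}}^{\dl}$. The upper-left block matches by the previous step. For the noise column, the noise row, and the bottom-right scalar entry to match, both hypotheses of the proposition are needed: (i) the total-noise equality $\|\ve{\sigma}^{\ul}\|_{1}=\|\ve{\sigma}^{\dl}\|_{1}=\Sigma_{\mathrm{tot}}$, and (ii) the uplink noise being equidistributed across clusters and, within each cluster, equidistributed across users, so that $[\ma{A}\ve{\sigma}^{\ul}]_{c}=\Sigma_{\mathrm{tot}}/C$ for every $c\in\set{C}$. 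A direct computation then confirms the remaining identifications, giving $\rho(\tilde{\ma{G}}^{\dl})=\rho(\tilde{\ma{G}}^{\ul})$, from which equality of the optimal balanced utilities follows. The optimal BS assignment $\ve{b}^{\star}$ is preserved across the two formulations because $\ve{b}$ enters both extended matrices only through the same cluster indices.

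I expect the last step to be the main obstacle: constructing a single positive-diagonal similarity that simultaneously aligns the noise column, the noise row, and the bottom-right scalar. The difficulty is that $\ma{\Gamma}\ma{\Psi}$ mixes the utility targets $\gamma_{c}$ with the cluster-aggregate gains $g_{c}$ and is generally nonuniform across clusters; matching the noise blocks therefore requires precisely the equidistribution hypothesis on $\ve{\sigma}^{\ul}$ in order to cancel the cluster-dependent factors. Once the correct $T$ is identified, the remaining verification is routine block-matrix arithmetic.
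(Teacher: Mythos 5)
Your overall skeleton is the same as the paper's: characterize the balanced level of each problem as the reciprocal spectral radius of a nonnegative coupling matrix, establish the transpose identity $\ma{A}\ma{W}_{\ve{b}}\Aa^{T}=\Aa\Vb\ma{A}^{T}=(\ma{A}\Vb^{T}\Aa^{T})^{T}$ (your computation here is exactly the paper's), and invoke the equidistributed-noise hypothesis to match the noise contributions. However, your concluding mechanism has a genuine gap: an exact positive-diagonal similarity $T^{-1}(\tilde{\ma{G}}^{\ul})^{T}T=\tilde{\ma{G}}^{\dl}$ of the $(C{+}1)\times(C{+}1)$ extended matrices does not exist in general, even under the noise hypothesis. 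Writing $D:=\ma{\Gamma}\ma{\Psi}$ and $M:=\ma{A}\Vb^{T}\Aa^{T}$, matching the upper-left blocks forces $T_{0}=D^{-1}$ (up to a scalar), and then matching the noise column of $(\tilde{\ma{G}}^{\ul})^{T}$ against $D\ve{z}^{\dl}$ requires $\frac{t}{\Pm}MD\ve{1}_C=\ve{z}^{\dl}$, i.e., the downlink noise vector would have to be proportional to $\ma{A}\Vb^{T}\Aa^{T}\ma{\Gamma}\ma{\Psi}\ve{1}_C$ --- a condition coupling the noise to the interference matrix that has no reason to hold. So the ``direct computation'' you defer to will not confirm the remaining identifications; the step you correctly flag as the main obstacle is not merely hard but impossible as stated.

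The repair is to stop asking for entrywise similarity and instead compare spectral radii of the collapsed $C\times C$ matrices, which is what the paper does. The dominant eigenvalue of the extended matrix equals that of $\ma{\Lambda}^{\dl}:=D\bigl[M+\frac{1}{\Pm}\ve{z}^{\dl}\ve{1}_C^{T}\bigr]$ (restrict the Perron eigenvector to have last component $1$; the last row then just encodes $\|\ve{q}\|_{1}=\Pm$), and likewise for $\ma{\Lambda}^{\ul}:=D\bigl[M^{T}+\frac{1}{\Pm}\ve{z}^{\ul}\ve{1}_C^{T}\bigr]$. Under the hypothesis that $\ve{z}^{\ul}=\frac{\Sigma_{\text{tot}}}{C}\ve{1}_C$ with $\Sigma_{\text{tot}}=\|\ve{z}^{\dl}\|_{1}$, the rank-one noise term of the uplink matrix is the transpose of that of the downlink matrix, so the entire bracket transposes, and then $\rho(\ma{\Lambda}^{\ul})=\rho(DX^{T})=\rho(X^{T}D)=\rho((DX)^{T})=\rho(DX)=\rho(\ma{\Lambda}^{\dl})$ using only $\rho(\ma{X}\ma{Y})=\rho(\ma{Y}\ma{X})$ and $\rho(\ma{X})=\rho(\ma{X}^{T})$ --- no similarity of the extended matrices is needed (your $(\ma{\Gamma}\ma{\Psi})^{1/2}$ conjugation of the upper-left block is an equivalent way to do this, but it only works because the noise has been absorbed into the same transposed bracket). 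One further caution that applies to your write-up and, strictly speaking, to the paper's as well: the identity $\ve{z}^{\ul}\ve{1}_C^{T}=(\ve{z}^{\dl}\ve{1}_C^{T})^{T}=\ve{1}_C(\ve{z}^{\dl})^{T}$ additionally requires the per-cluster downlink noise sums $[\ma{A}\ve{\sigma}^{\dl}]_{c}$ to be uniform across clusters, so that assumption should be made explicit.
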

		\begin{proof} The proof is given in the Appendix.
		\end{proof}
		
		Note that the optimizer $\ve{b}^{\ast}$ for BS assignment in downlink can be equivalently found by minimizing the spectral radius $\ma{\Lambda^{(u)}(\ve{b})}$ in the uplink. Once $\ve{b}^{\ast}$ is found, the associate optimizer for uplink power ${\ve{q}^{\ul}}^{\ast}$ is given as the dominant right-hand eigenvector of matrix $\ma{\Lambda}^{\ul}(\ve{b}^{\ast})$, while the associate optimizer for downlink power ${\ve{q}^{\dl}}^{\ast}$ is given as the dominant right-hand eigenvector of matrix $\ma{\Lambda}^{\dl}(\ve{b}^{\ast})$. 
		Proposition \ref{prop:Duality} provides an efficient approach to solve
		the downlink problem with two iterative steps (as the one proposed in
		\cite{BocheDuality06}): 1) for a fixed power allocation
		$\hat{\ve{q}}$, solve the uplink problem and derive the assignment
		$\ve{b}^{\ast}$ that associated with the spectral radius of extend
		coupling matrix $\ma{\Lambda}^{\ul}$, and 2) for a fixed assignment
		$\hat{\ve{b}}$, update the power $\ve{q}^{\ast}$ as the solution of
		\eqref{eqn:DL_matrixEqua}. 
		
		Although we are able to find a dual uplink problem for the
		downlink problem in \eqref{eqn:LB_dl} with our proposed utility
		functions \emph{under sum power
		  constraints}, 
		\insl{we are not able to construct a dual network with decoupled properties for the modified problem
		\emph{under per BS power constraints} \eqref{eqn:FP_LB}. However,
		numerical experiments show that our approach to the downlink
		based on the proposed uplink solution does improve the network
		performance, although the duality does not exactly hold between the downlink problem and our proposed uplink problem under the per BS power constraints.}
		%
		%
		\section{Numerical Results}\label{sec:Simu}
 We consider a real-world urban scenario based on a pixel-based mobility model of realistic collection of BS locations and pathloss model for the city of Berlin. The data was assembled within the EU project MOMENTUM and is available at  \cite{MOMENTUM}. We select 15 tri-sectored BS in the downtown area. Users are uniformly distributed and are clustered based on their SINR distributions as shown in Fig. \ref{fig:Berlin} (UEs assigned to each sector are clustered into groups and are depicted in distinct colors). The SINR threshold is defined as -6.5 dB and the power constraint per BS is 46dBm. The 3GPP antenna model defined in \cite{3GPP36942} is applied.    
 		
		Fig. \ref{fig:convergence} illustrates the convergence of the algorithm.  Our algorithm achieves the max-min utility balancing, and improves the feasibility level $C^{(u)}(\Pm)$ by each iteration step. 
		
		In Fig.\ref{fig:cov_cap_mu} we show that the trade-off between coverage and capacity can be adjusted by tuning parameter $\mu$. By increasing $\mu$ we give higher priority to capacity utility (which is proportional to the ratio between total useful power and total interference power), while for better coverage utility (defined as minimum of SINRs) we can use a small value of $\mu$ instead.
		
		Fig. \ref{fig:coverage}, \ref{fig:capacity} and \ref{fig:power} illustrate the improvement of coverage and capacity performance and decreasing of the energy consumption in both uplink and downlink systems by applying the proposed algorithm, when the average number of the users per BS is chosen from the set $\{15,20,25,30,35\}$. In Fig. \ref{fig:capacity} we show that the actual average SINR is also improved, although the capacity utility is defined as a lower bound of the average SINR. Fig. \ref{fig:power} illustrate that our algorithm is more energy efficient when comparing with the fixed BS power budget scenario. Compared to the near-optimal uplink solutions, less improvements are observed for the downlink solutions as shown in Fig. \ref{fig:coverage}, \ref{fig:capacity} and \ref{fig:power}. This is because we derive the downlink solution by exploiting an uplink problem which is not exactly its dual due to the individual power constraints (as described in Section \ref{sec:Duality}). However, the sub-optimal solutions still provide significant performance improvements.
		\section{Conclusions and Further Research}\label{sec:con}
		We present an efficient and robust algorithmic optimization framework build on the utility model for joint optimization of the SON use cases coverage and capacity optimization and load balancing. The max-min utility balancing formulation is employed to enforce the fairness across clusters. We propose a two-step optimization algorithm in the uplink based on fixed-point iteration to iteratively optimize the per base station antenna tilt and power allocation as well as the cluster-based BS assignment and power allocation. We then analyze the network duality via Perron-Frobenius theory, and propose a sub-optimal solution in the downlink by exploiting the solution in the uplink. Simulation results show significant improvements in performance of coverage, capacity and load balancing in a power-efficient way, in both uplink and downlink. In our follow-up papers we will further propose a more complex interference coupling model and the optimization framework where frequency band assignment is taken into account. We will also examine the suboptimality under more general form of power constraints.   
		 
		 	\begin{figure}[t]
		  \centering
		  \includegraphics[width=.5\textwidth]{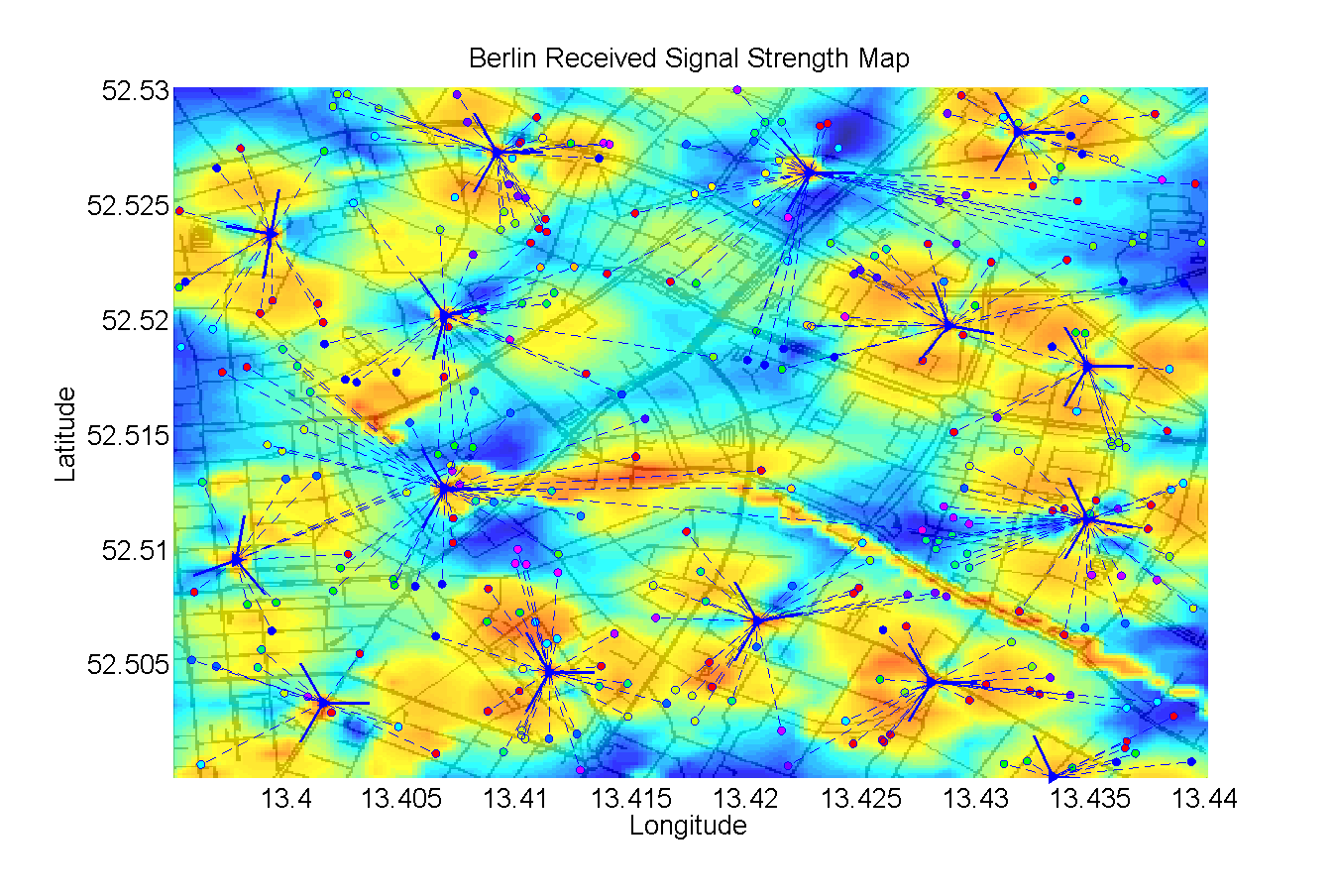}
		  \caption{Berlin Scenario.}
		  \label{fig:Berlin}
		\end{figure}
		
		\begin{figure}[ht]
		  \centering
		  \includegraphics[width=.5\textwidth]{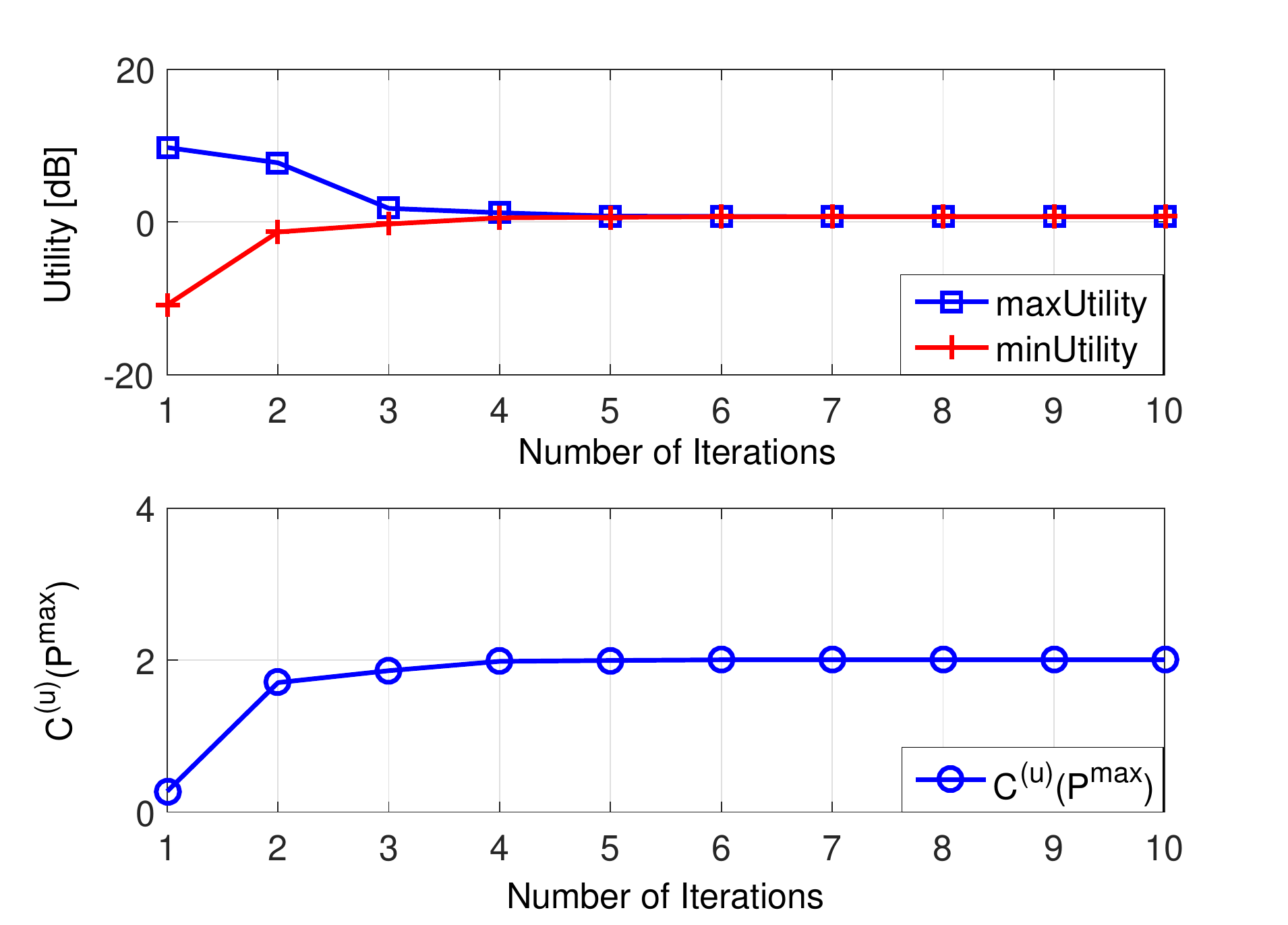}

		  \caption{Algorithm convergence.}
		  \label{fig:convergence}
		\end{figure}
		%
		\begin{figure}[ht]
		  \centering
		  \includegraphics[width=.5\textwidth]{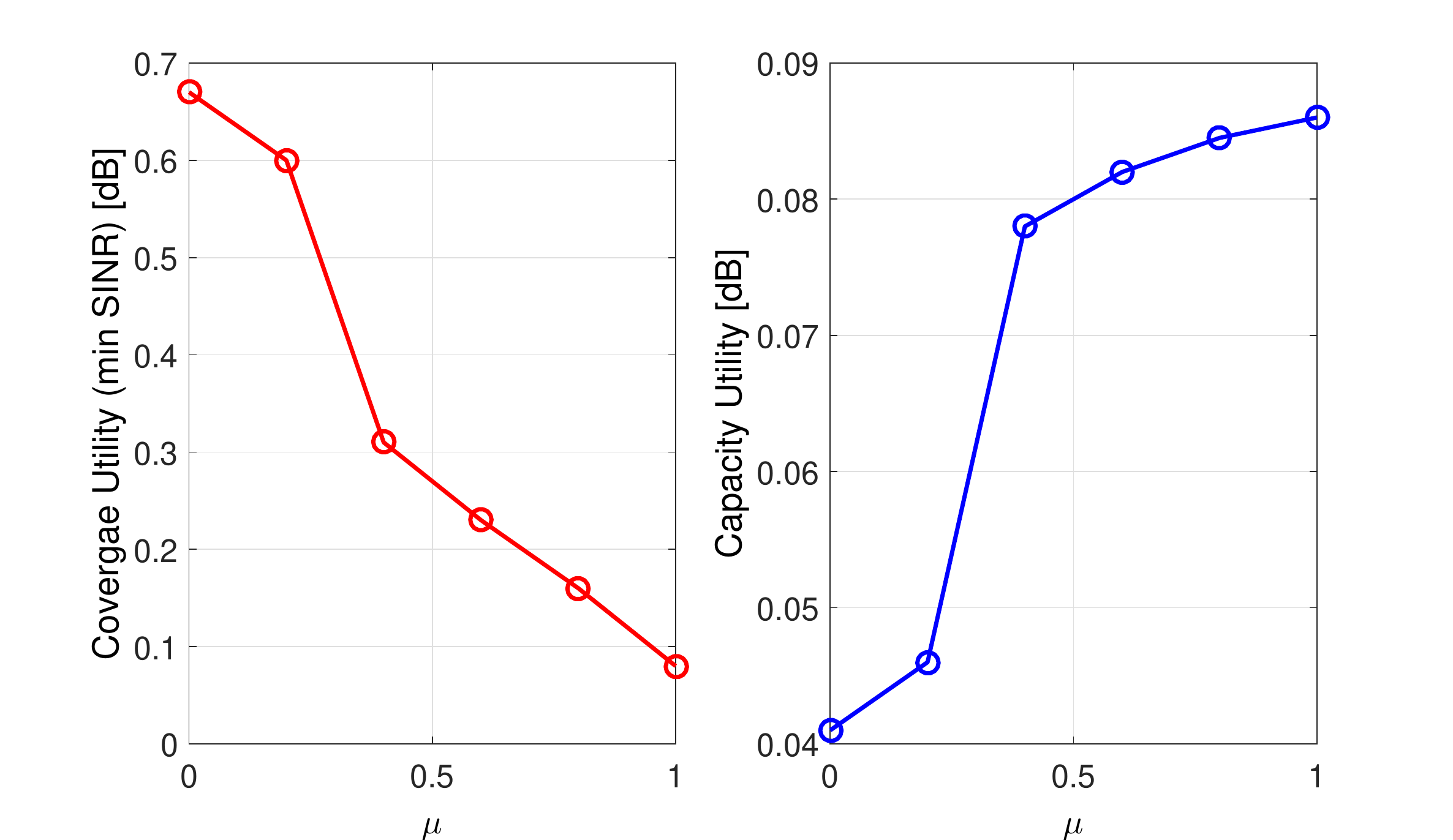}  
			\vspace{-1em}
		  \caption{Trade-off between utilities depending on $\mu$.}
		  \label{fig:cov_cap_mu}
			\vspace{-1.5em}
		\end{figure}
		\begin{figure}[ht]
		  \centering
		  \includegraphics[width=.5\textwidth]{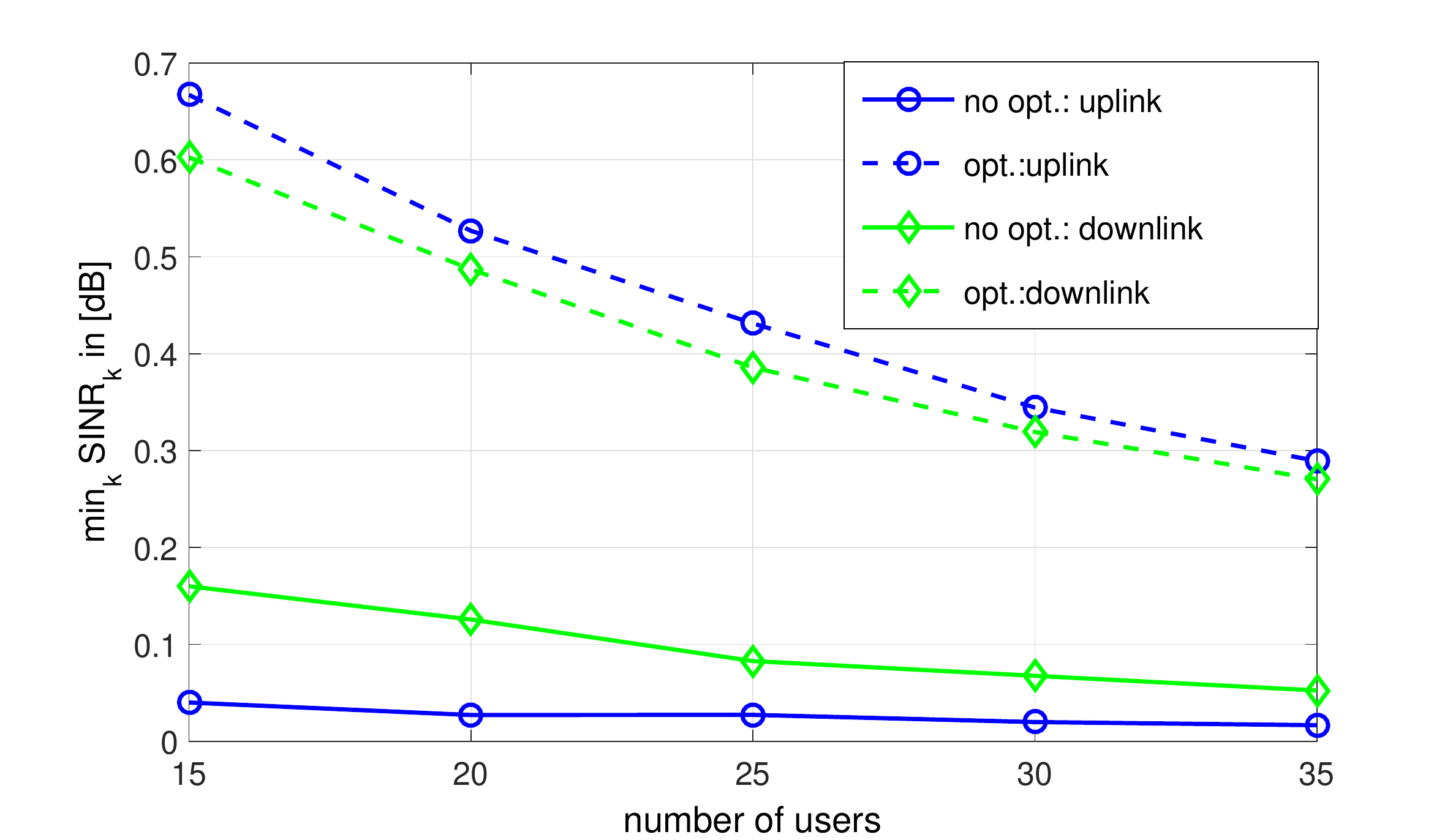} 
		  \caption{Performance of proposed algorithm: coverage.}
		  \label{fig:coverage}
		\end{figure}
		\begin{figure}[ht]
		  \centering
		  \includegraphics[width=.43\textwidth]{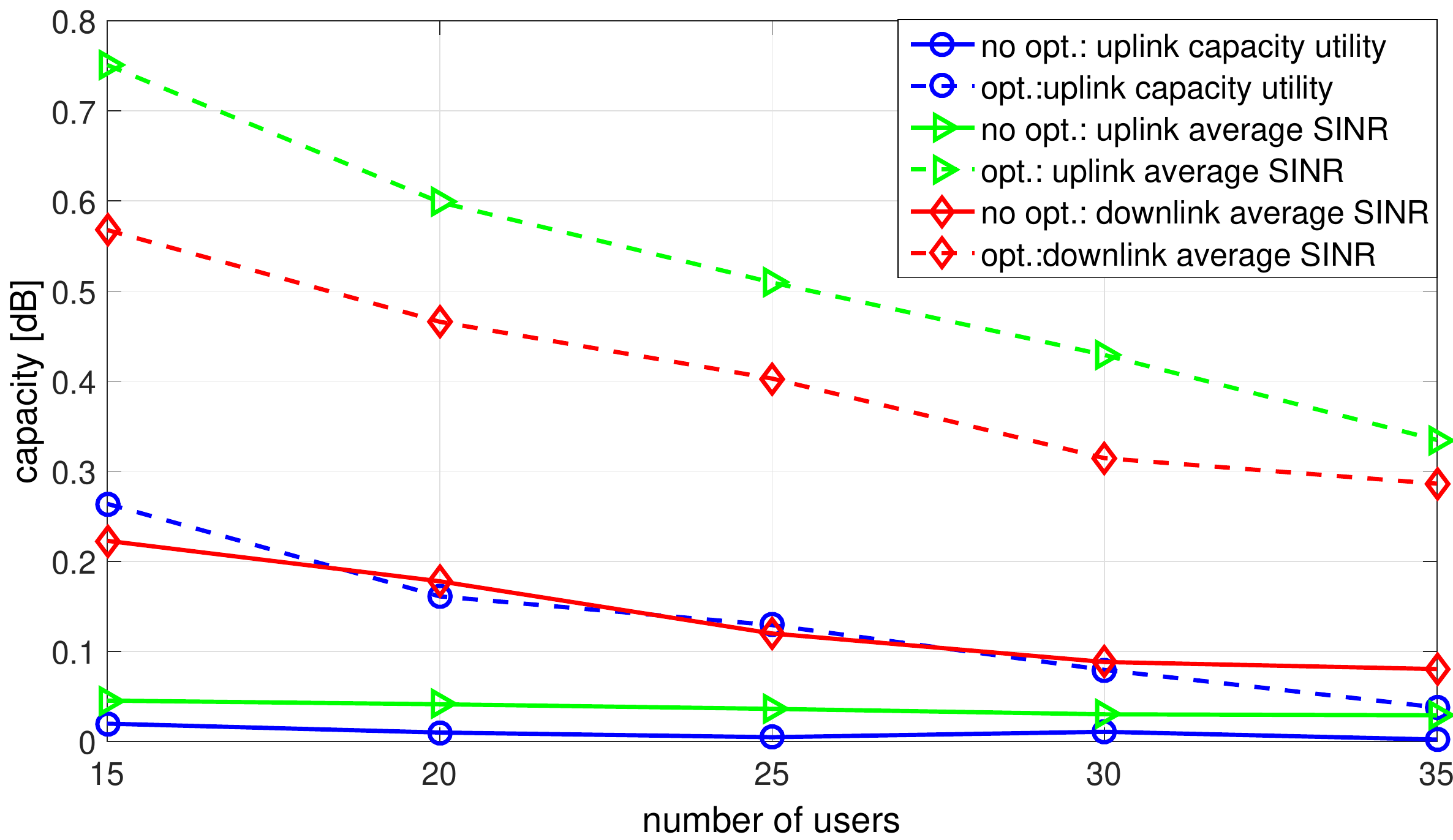} 
		  \caption{Performance of proposed algorithm: capacity.}
		  \label{fig:capacity}
		\end{figure}
		\begin{figure}[!ht]
		  \centering
		  \includegraphics[width=.43\textwidth]{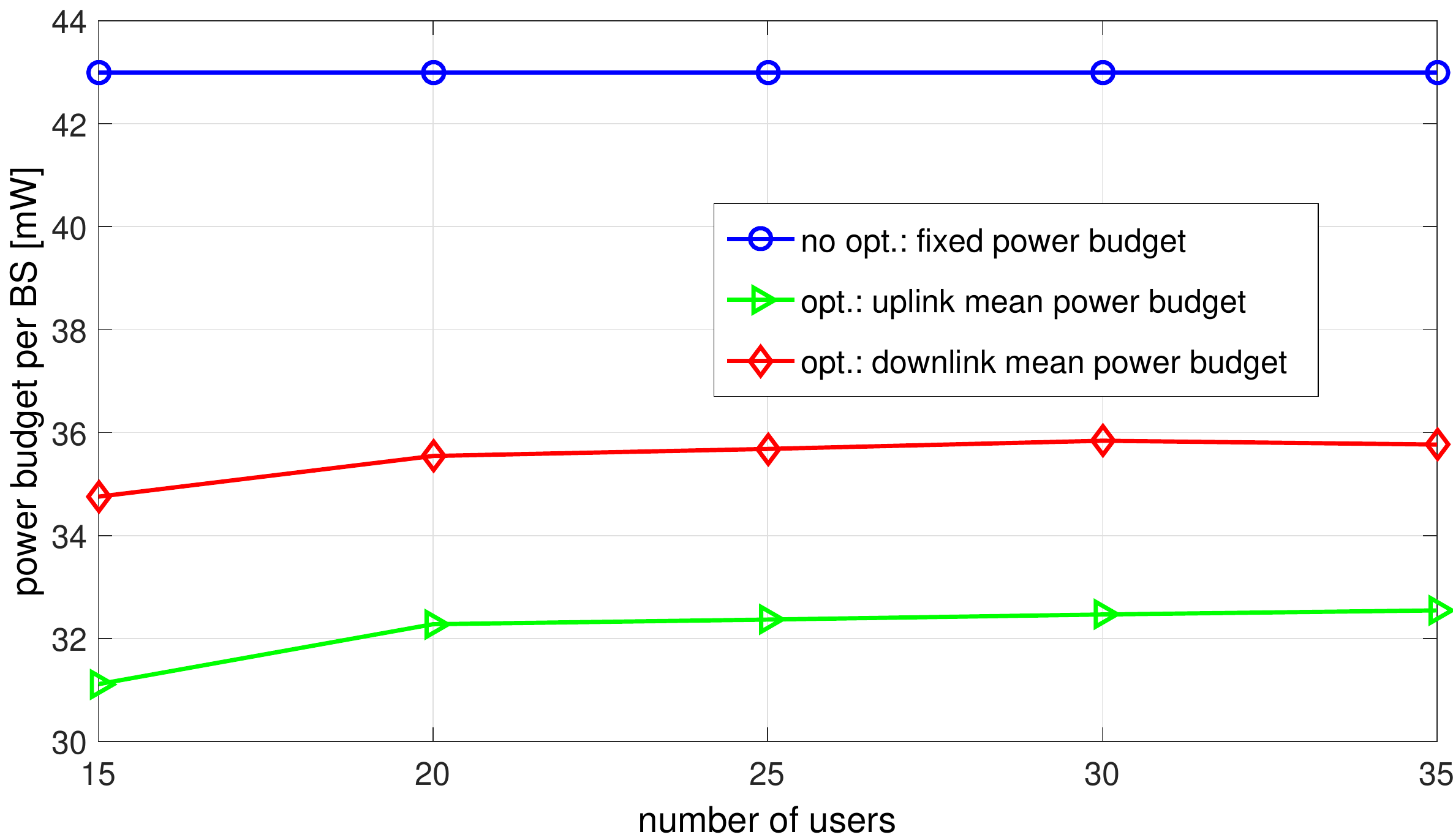} 
		  \caption{Performance of proposed algorithm: per-BS power budget.}
		  \label{fig:power}
		\end{figure}

\appendix
\begin{proof} 

{\it Proposition \ref{prop:Duality}:}
For any fixed BS assignment $\hat{\ve{b}}$, denote $\hW:=\ma{W}_{\hat{\ve{b}}}$ and $\hV:=\Vb$ for convenience,  the optimal downlink power solution $\hat{\ve{q}}^{\dl}$ for problem \eqref{eqn:LB_dl} satisfies \cite{stanczak2009fundamentals}
		\begin{equation}
		\vspace{-0.2em}
		\ma{\Lambda}^{\dl} \hat{\ve{q}}^{\dl}=\frac{1}{C^{\dl}(\hat{\ve{b}},\Pm)} \hat{\ve{q}}^{\dl}, \hat{\ve{q}}^{\dl}\in\R_{+}^C
		\label{eqn:DL_matrixEqua}
		\vspace{-0.2em}
		\end{equation}
		where $\ma{\Lambda}^{\dl}\in\R_{+}^{C\times C}$ is defined as
		\begin{equation}
		\vspace{-0.2em}
		\ma{\Lambda}^{\dl}:=\ma{\Gamma}\ma{\Psi}\left[\ma{A}\hV^T\Aa^T+\frac{1}{\Pm}\ve{z}^{\dl}\ve{1}_C^T\right].
		\label{eqn:DL_Lambda}
		\vspace{-0.2em}
		\end{equation}
		we denote $\ma{\Gamma}:=\diag\{\gamma_1,\ldots, \gamma_C\}$, $C^{\dl}(\hat{\ve{b}},\Pm)=\max_{\ve{q}\geq 0}\min_c U_c^{(\text{d},1)}/\gamma_c$ subject to $\|\ve{q}\|_1\leq \Pm$, and 
		$\ve{1}_C$ is a C-dimensional all-one vector.
		\eqref{eqn:DL_matrixEqua} and \eqref{eqn:DL_Lambda} are derived by writing the utility fairness $U_c^{(\text{d},1)}/\gamma_c=C^{\dl}(\hat{\ve{b}}, \Pm)$ for all $c\in\set{C}$ and the power constraint $\|\ve{q}^{\dl}\|_1=\Pm$ with matrix notation. Targets $\ve{\gamma}$ is feasible if and only if $C^{\dl}(\hat{\ve{b}},\Pm)>1$. 
		
		Similarly, the optimal uplink power solution $\hat{\ve{q}}^{\ul}$ for uplink problem \eqref{eqn:LB_ul} needs to satisfy 
		\begin{equation}
		\vspace{-0.2em}
		\ma{\Lambda}^{\ul}\hat{\ve{q}}^{\ul}=\frac{1}{C^{\ul}(\hat{\ve{b}},\Pm)} \hat{\ve{q}}^{\ul}, \hat{\ve{q}}^{\ul}\in\R_{+}^C
		\label{eqn:UL_matrixEqua}
		\vspace{-0.2em}
		\end{equation}
		where $\ma{\Lambda}^{\ul}\in\R_{+}^{C\times C}$ is defined as
		\begin{equation}
		\vspace{-0.2em}
		\ma{\Lambda}^{\ul}:=\ma{\Gamma}\ma{\Psi}\left[\ma{A}\hW\Aa^T+\frac{1}{\Pm}\ve{z}^{\ul}\ve{1}_C^T\right].
		\label{eqn:UL_Lambda}
		\vspace{-0.2em}
		\end{equation}
		where $\ve{z}^{\ul}:=\ma{A}\ve{\sigma}^{\ul}$, i.e., $z_c^{\ul}=\Sigma_{\text{tot}}/C$ for all $c\in\set{C}$. 
		
		The balanced level $C^{\dl}(\hat{\ve{b}},\Pm)$ and
		$C^{\ul}(\hat{\ve{b}},\Pm)$ are the reciprocal spectral radius of the
		nonnegative extended coupling matrix $\ma{\Lambda}^{\dl}$ and
		$\ma{\Lambda}^{\ul}$. Moreover, according to Perron-Frobenius theorem, if both $\ma{\Lambda}^{\dl}$ and
		$\ma{\Lambda}^{\ul}$ are irreducible, they have unique real spectral radius and their corresponding eigenvectors (power allocation) have strictly positive components. By comparing the interference terms in
		\eqref{eqn:DL_Lambda} and \eqref{eqn:UL_Lambda}, we have
		$(\ma{A}\hV^T\Aa^T)^T=\Aa\hV\ma{A}^T=\ma{A}\diag\{\ve{\alpha}\}\hV
		\ma{I}\ma{A}^T=\ma{A}\diag\{\ve{\alpha}\}\hV\diag^{-1}\{\ve{\alpha}\}\diag\{\ve{\alpha}\}
		\ma{A}^T=\ma{A}\hW^T\Aa^T$. By comparing the noise terms we have
		$\ve{z}^{\ul}=\frac{1}{C} \ve{1}_C{\ve{z}^{\dl}}^T\ve{1}_C$ (by using
		$z_c^{\ul}=\Sigma_{\text{tot}}/C$ for all $c\in\set{C}$), thus
		$\ve{z}^{\ul}\ve{1}_C^T=\frac{1}{C}
		\ve{1}_C{\ve{z}^{\dl}}^T\ve{1}_C\ve{1}_C^T=\ve{1}_C{\ve{z}^{\dl}}^T=(\ve{z}^{\dl}\ve{1}_C^T)^T$. By
		using the properties of spectral radius $\rho(\ma{X})=\rho(\ma{X}^T)$
		and $\rho(\ma{X}\ma{Y})=\rho(\ma{Y}\ma{X})$ we have that
		$\rho(\ma{\Lambda}^{\dl})=\rho(\ma{\Lambda}^{\ul})$ and thus
		$C^{\dl}(\hat{\ve{b}},P^{\text{max}})=C^{\ul}(\hat{\ve{b}},P^{\text{max}})$. Notice
		that the network duality holds for any given BS assignment
		$\hat{\ve{b}}$, the achievable utility regions are the same for both
		the downlink problem \eqref{eqn:LB_dl} and uplink problem
		\eqref{eqn:LB_ul}.
		\end{proof}
		\subsection*{Acknowledgements} 
		We would like to thank Dr. Martin Schubert and Dr. Carl J. Nuzman for their expert advice.
		\ifCLASSOPTIONcaptionsoff
		  \newpage
		\fi
		\bibliographystyle{IEEEtran}
		\bibliography{main}
	\end{document}